\newcommand{\orgname}[1]{#1}
\newcommand{\orgdiv}[1]{#1}
\newcommand{\orgaddress}[1]{#1}
\newcommand{\country}[1]{#1}
\newcommand{\state}[1]{#1}
\newcommand{\postcode}[1]{#1}
\newcommand{\corresp}[2][]{\noindent\textbf{#2}}
\newcommand{\journaltitle}[1]{}
\newcommand{\DOI}[1]{}
\newcommand{\pubyear}[1]{}
\newcommand{\copyrightyear}[1]{}
\newcommand{\access}[1]{}
\newcommand{\appnotes}[1]{}
\newcommand{\firstpage}[1]{}
\newcommand{\subtitle}[1]{}
\newcommand{\history}[1]{}
\newcommand{\received}[3][]{}
\newcommand{\revised}[3][]{}
\newcommand{\accepted}[3][]{}
\newcommand{\authormark}[1]{}
\newcommand{\present}[1]{}
\theoremstyle{plain}
\theoremstyle{plain}%
\newtheorem{theorem}{Theorem}
\newtheorem{proposition}[theorem]{Proposition}%
\theoremstyle{definition}%
\newtheorem{remark}{Remark}%
\theoremstyle{remark}%
\begin{document}
	
	\title{Generalizing Condorcet's Jury Theorem to Social Networks}
	
	\author[1,3]{Dan Braha}
	\author[1,2]{Marcus A.M. de Aguiar}

	
	\affil[1]{\orgname{New England Complex Systems Institute}, \orgaddress{ \state{Cambridge, MA}, \country{United States of America}}}
	\affil[2]{\orgdiv{Instituto de F\'isica Gleb Wataghin}, \orgname{Universidade Estadual de Campinas},  \orgaddress{\postcode{13083-970}, \state{Campinas, SP}, \country{Brazil}}}
	\affil[3]{\orgdiv{University of Massachusetts}, \orgname{Darthmouth}, \orgaddress{\state{Massachusetts},, \country{United States of America}}}
	
	
	\date{}
\maketitle
\thispagestyle{empty}

	\abstract{We generalize Condorcet's jury theorem (CJT) to socially connected populations in which agents revise discrete choices on a network in the presence of zealots. Free agents receive privately informative signals about the correct alternative and, at each update, either retain their state or imitate a uniformly chosen neighbor (free or zealot). For finite networks, we derive closed-form stationary laws for vote counts, and we characterize the corresponding vote-share limits as the number of free voters tends to infinity. For majority rule---both in binary and multi-alternative settings---we obtain an exact accuracy limit in closed form via the regularized incomplete beta function. For plurality rule, we establish sharp closed-form lower bounds on accuracy, expressed in terms of regularized incomplete beta functions. Under an absolute-majority condition for the correct alternative, both majority and plurality accuracies strictly exceed the accuracy of any single voter, showing that informative signals, coupled through social interaction, are amplified at the group level. These results extend CJT beyond independence and provide closed-form accuracy benchmarks for networked decision systems in social, biological, and engineered settings.}

\noindent\textbf{Keywords:} Condorcet's jury theorem, voter model, beta distribution, Dirichlet-multinomial distribution

\corresp[$\ast$]{Corresponding author. \href{mailto:braha@necsi.edu}{braha@necsi.edu}}

	\section{Introduction}

Condorcet's jury theorem (CJT) offers a foundational statistical account of collective decision-making: if each voter independently has a better-than-random chance of being correct, then a simple majority is more likely to select the correct option than any individual, with this advantage growing with group size and approaching certainty as the group becomes large. First articulated by Condorcet in 1785 \cite{r1}, the result was revived by Black \cite{r2}, independently rederived by Kazmann \cite{r3}, and named the ‘Condorcet jury theorem' by Grofman \cite{r4}. It is now a cornerstone of social choice theory \cite{r5,r6,r7,r8, r9, r10, r11, r12, r13, r14, r15, r16, r17, r18, r19, r20}, informing debates in philosophy (knowledge aggregation), political science (justifications for democracy and broad franchise), law (jury size and composition), and organizational decision-making across economics, management, marketing, and finance \cite{r5}.

Classical CJT relies on stringent assumptions: a binary choice with an objectively correct answer, common interests, homogeneous voter competence $p > 0.5$, independent private signals, and sincere (non-strategic) voting under simple majority. Relaxations of these assumptions reveal where information aggregation can fail. Under independence with heterogeneous competences, the non-asymptotic claim need not hold uniformly; nevertheless, majority accuracy typically exceeds the average voter competence and can, in some cases, exceed that of the best member, while the asymptotic claim holds under mild conditions on the competence distribution \cite{r6, r9, r11, r12, r20}. When individual competences are known and signals are independent, the optimal aggregation rule is a weighted majority with log-odds weights (with an appropriate threshold), which generalizes simple majority \cite{r16}. Under supermajority rules, when average voter competence exceeds the required vote share, sufficiently large electorates are more likely to choose the correct alternative than a randomly selected voter \cite{r17}. Departures from independence introduce correlation-sensitive phenomena: if voters share information sources or follow opinion leaders \cite{r11, r13, r18, r19}, positive correlation between voter decisions can erode or even reverse the Condorcet effect, whereas certain forms of negative correlation can strengthen it. Even so, conditions have been identified under which “wisdom of crowds” persists despite dependence, for example through explicit upper bounds on positive correlation under which majority decisions still outperform individual decisions, with accuracy that increases in larger groups.

A complementary literature studies social influence before voting. In sequential settings, informational cascades can generate herding on incorrect actions despite informative private signals \cite{r21}. In networked communication with naïve (DeGroot) updating, convergence to truth in large societies requires that no agent retains asymptotically dominant influence, a structural property of the graph \cite{r22}. Related Bayesian-learning models over networks identify topologies that permit—or preclude—efficient information aggregation \cite{r23}. Finally, in majority-dynamics models where agents iteratively adopt local majorities prior to a final vote, network structure can either block aggregation or guarantee unanimity when initial biases are favorable (e.g., on expander graphs), underscoring that how information flows is as pivotal as how it is counted \cite{r24}.

Taken together, these extensions, spanning statistical, epistemic, and social-choice perspectives, highlight both the power and limits of the classical CJT: large, independent, well-informed groups tend to excel at truth-tracking, but strategic behavior, correlation, and networked information can constrain accuracy. Despite this progress, models that make social influence and dependence explicit (e.g., via network-mediated interactions) and that yield closed-form characterizations of collective accuracy, allowing direct benchmarking against individual accuracy under realistic departures from independence, remain scarce.

\section{Motivation}\label{sec2}
	
	We develop a general framework of social influence that bridges a complex-systems perspective on collective human dynamics with formal models of collective decision-making. Within this framework, we extend the Condorcet Jury Theorem (CJT) to socially connected networks, showing that---even under peer influence---networked
	decision-making can achieve higher collective accuracy than individual voting, thereby moving beyond the assumption of independence. Our analysis builds on the classical voter model \cite{r25,r26,r27,r28,r29}, and in particular on voter models
	with fixed agents (“zealots”) \cite{r30,r31,r32,r33,r34,r35,r36,r37}. We make two contributions. First, for the binary (two-zealot) model introduced in our prior work \cite{r35,r38,r39, r40}, we give a new, direct derivation of the stationary distribution via a birth--death/detailed-balance argument and use it to obtain a generalized CJT. Second, we introduce a multistate (\(m\)-zealot) model, derive its closed-form stationary distribution, and prove Condorcet--type lower bounds for both majority and plurality of the correct alternative. For clarity, we begin with the binary case as a motivating baseline; the full multistate theory is presented in the Results section
	and recovers the binary results when \(m=2\).
	
	We consider an undirected network (graph) \(G=(\mathcal{V},E)\), where nodes \(\mathcal{V}\) are voters and edges \(E\) encode
	pairwise interactions. Each of the \(n+\alpha+\beta\) nodes holds a binary state: \(1\) (correct) or \(0\) (incorrect). There are \(\alpha\) nodes fixed in state \(1\) and \(\beta\) nodes fixed in state \(0\); we refer to these fixed-state nodes as
	\emph{zealots}. The remaining \(n\) nodes are \emph{free voters} whose states evolve via peer influence along the edges.
	
	The network evolves in discrete time. At each time step, a free node is selected uniformly at random, and its state is updated as follows: with probability $\ell$ it remains unchanged (\emph{retention/inertia}); with probability $1-\ell$ it copies the state of a uniformly chosen neighbor (\emph{imitation}). In the long-time limit $t \rightarrow \infty$, the alternative supported by more than half of the free voters is selected as the consensus outcome.
	
	For a fully connected network (complete graph), the free voters are exchangeable, so the system's state is fully determined by the number \(k\in\{0,\dots,n\}\) of free voters in state \(1\). The same reduction holds for a fully mixed (random-matching) population: under uniform pairings, the dynamics depend only on \(k\), and the one-step transition probabilities coincide with those for the complete graph. Denote by \(\sigma_k\) the macrostate with \(k\) free voters in state \(1\) (and \(n-k\) in state \(0\)). Let \(X_t\in\{0,\dots,n\}\) denote the number of free voters in state \(1\) at time \(t\), and set \(P_t(k):=P\{X_t=k\}\). Because a single free voter updates at each step, \(\{X_t\}\) is a birth–death chain: \(P_{t+1}(k)\) depends only on \(P_t(k-1)\), \(P_t(k)\), and \(P_t(k+1)\). The evolution is 
	\begin{equation*} 
		\begin{array}{ll} P_{t+1}(k) &= \displaystyle{P_t(k)\left\{\ell+		\frac{1-\ell} {n(n+\beta+\alpha-1)} \left[ k(k+\alpha-1) + (n-k)(n+	\beta-k-1) \right] \right\} +} \\ \\ & P_t(k-1) 	\displaystyle{\frac{1-\ell}{n(n+\beta+\alpha-1)} (k+\alpha-1)(n-k+1)} \, + \\ \\ & P_t(k+1) \displaystyle{\frac{1-\ell}{n(n+\beta+\alpha-1)} (k+1)(n+\beta-k-1)} \;, \end{array} 
	\end{equation*}

	This expression can be read term by term. For example, the third line is the probability that the system is in \(\sigma_{k+1}\) at time \(t\), multiplied by the probability that a voter in state \(1\) is selected for update, \((k+1)/n\), by the probability that the selected update is \emph{imitative} (rather than retention), \(1-\ell\), and by the conditional probability that it copies a voter in state \(0\), \(\bigl(n-k-1+\beta\bigr)/\bigl(n-1+\alpha+\beta\bigr)\). The other terms are analogous. (At the boundaries \(k=0\) and \(k=n\), the out-of-range terms are understood to be absent.)
	
	At stationarity, let \(X\) denote the number of free voters choosing the correct alternative (state \(1\)). In prior work \cite{r35,r38,r39,r40}, it was shown via hypergeometric generating functions that the stationary probability of observing \(X=k\), independent of the initial configuration, is:
	\begin{equation*}
		P\left(X=k\right)=\frac{\left(\genfrac{}{}{0pt}{0}{\alpha +k-1}k\right)\left(\genfrac{}{}{0pt}{0}{n+\beta
				-k-1}{n-k}\right)}{\left(\genfrac{}{}{0pt}{0}{n+\alpha +\beta -1}n\right)}
		\label{eq1}
	\end{equation*}
	where the influence parameters $\alpha$ and $\beta$ are generalizable to positive real values through the gamma function (generalized binomial coefficients). In particular, when $\alpha=\beta=1$, $P(X=k)=1/(n+1)$ for $k=0,1,\dots,n$ (uniform), a feature specific to fully connected networks. For completeness, we later give a direct birth--death (detailed-balance) derivation; see Proposition~2.
	
	As a baseline, suppose the free voters act independently and sample only from the
	\(\alpha+\beta\) fixed voters. Then each free voter selects the correct alternative
	(state \(1\)) with probability \(p=\alpha/(\alpha+\beta)\), the \emph{individual accuracy}.
	Consequently, the number of free voters in state \(1\) is \(\mathrm{Binomial}(n,p)\), with
	
	\begin{equation*}\label{eq:baseline-binom}
		P(X=k)=\binom{n}{k}\,p^{k}(1-p)^{\,n-k}, \qquad
		p=\frac{\alpha}{\alpha+\beta}.
	\end{equation*}

	According to the classical form of the CJT \cite{r6, r11}, independent voters with
	\(p>1/2\), under majority vote, yield
	\begin{equation*}\label{eq:cjt-majority}
		P_n(\text{correct decision}) = P\!\left(X \ge \left\lfloor \tfrac{n}{2} \right\rfloor + 1\right)
		= \sum_{k=\lfloor n/2 \rfloor + 1}^{n} \binom{n}{k}\, p^{k} (1-p)^{\,n-k} \;>\; p,
	\end{equation*}
	with \(P_n(\text{correct decision})\to 1\) as \(n\to\infty\). CJT thus demonstrates that a
	group's collective accuracy surpasses individual accuracy provided that individual
	signals are better than random.

	In contrast, when peer influence is incorporated via a fully connected network, the analogous inequality
	\begin{equation*}\label{eq:peer-majority}
		P_n(\text{correct decision})
		= \sum_{k=\lfloor n/2 \rfloor + 1}^{n}
		\frac{\displaystyle\binom{\alpha+k-1}{k}\,\binom{n+\beta-k-1}{\,n-k\,}}
		{\displaystyle\binom{n+\alpha+\beta-1}{\,n\,}}
		\;>\; p
	\end{equation*}
	does not hold uniformly for all \(n\), though numerical evidence suggests it holds asymptotically for large \(n\). We provide a rigorous analysis of the \(n\to\infty\) regime for the binary case and, crucially, extend the theory to \(m\) alternatives, where we derive a closed-form stationary distribution and prove majority/plurality performance guarantees (see Results). These findings show that peer influence can amplify collective accuracy beyond that of any single above-chance individual—even with multiple alternatives (under conditions made precise in the Results section).

	\section{Results}\label{sec3}
	
	\subsection{Binary zealots ($m=2$)}
	\label{subsec:binary}

We establish a generalized Condorcet Jury Theorem (CJT) for fully mixed networks with zealots, and then derive auxiliary propositions used in its proof.\par\medskip

\begin{theorem}
	\label{theorem1}
	\noindent \textbf{(Generalized CJT on fully mixed networks with binary zealots)}.
	Let $n\in\mathbb{N}$, $\alpha,\beta>0$, and $\ell\in[0,1)$. Consider a complete graph with $n+\alpha+\beta$ nodes, where $n$ are free voters, $\alpha$ are zealots fixed at state $1$ (``correct''), and $\beta$ are zealots fixed at state $0$ (``incorrect''). At each time step a free node is selected uniformly; with probability $\ell$ (\emph{retention/inertia}) it keeps its current state, and with probability $1-\ell$ (\emph{imitation}) it copies the state of a uniformly chosen neighbor (free or zealot). Let $X_t\in\{0,1,\dots,n\}$ be the number of free voters in state $1$ at time $t$.
	\begin{enumerate}[label=(\alph*), leftmargin=2em]
	\item The chain $\{X_t\}_{t\ge0}$ on $\{0,1,\dots,n\}$ is irreducible and aperiodic, with a unique stationary distribution (independent of $\ell$). In stationarity, $X\sim \mathrm{Beta\text{-}Binomial}(n;\alpha,\beta)$.
	
	\item Let $X$ have the stationary distribution of $\{X_t\}$, and assume $\alpha>\beta>0$. The probability of selecting the correct alternative by simple majority,
	\[
	P_n(\text{correct decision}) \;=\; P\!\left( X \ge \left\lfloor \frac{n}{2}\right\rfloor + 1 \right)
	\]
	satisfies, as $n\to\infty$, 
	\[
	\lim_{n\to\infty}P_n\!\left(\text{correct decision}\right)\;>\;\frac{\alpha}{\alpha+\beta},
	\]
	where $\alpha/(\alpha+\beta)$ is the (privately biased) probability that a free voter selects the correct alternative (i.e., chooses $1$).
	\end{enumerate}
\end{theorem}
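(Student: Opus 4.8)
\emph{Part (a).} I would read the one-step transition probabilities directly off the displayed evolution equation: the up-rate is $q_{k,k+1}=\tfrac{1-\ell}{n(n+\alpha+\beta-1)}(k+\alpha)(n-k)$ for $0\le k\le n-1$ and the down-rate is $q_{k,k-1}=\tfrac{1-\ell}{n(n+\alpha+\beta-1)}k(n+\beta-k)$ for $1\le k\le n$. Since $\alpha,\beta>0$ and $\ell<1$, both are strictly positive on their admissible ranges, so the birth--death chain communicates on all of $\{0,\dots,n\}$, i.e. it is irreducible. Aperiodicity is immediate because the retention term keeps every diagonal entry positive (indeed $k(k+\alpha-1)+(n-k)(n+\beta-k-1)>0$ for every $k$, even when $\ell=0$), so there is a self-loop. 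A finite irreducible aperiodic chain has a unique stationary law, and checking detailed balance $\pi(k)\,q_{k,k+1}=\pi(k+1)\,q_{k+1,k}$ identifies it as $\mathrm{Beta\text{-}Binomial}(n;\alpha,\beta)$, whose pmf is exactly the generalized-binomial expression recalled in Section~\ref{sec2}; this is precisely Proposition~2, which I would invoke here. The stationary law visibly does not depend on $\ell$.

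\emph{Part (b).} I would use the hierarchical representation of the Beta--Binomial: draw $\Theta\sim\mathrm{Beta}(\alpha,\beta)$ and, conditionally on $\Theta=\theta$, let $X\sim\mathrm{Binomial}(n,\theta)$. By the law of large numbers, conditionally on $\Theta=\theta$ we have $X/n\to\theta$, so for $\theta\neq 1/2$ the value $X/n$ is eventually bounded away from $1/2$ and hence $\phi_n(\theta):=P\!\left(X\ge\lfloor n/2\rfloor+1\mid\Theta=\theta\right)\to\mathbf 1\{\theta>1/2\}$; the threshold $\lfloor n/2\rfloor+1$ rather than exactly $n/2$ is harmless for the same reason. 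Since $0\le\phi_n\le1$ and $\{\theta=1/2\}$ is a $\Theta$-null set (the Beta density $f_{\alpha,\beta}$ is integrable), bounded convergence gives
\[
\lim_{n\to\infty}P_n(\text{correct decision})=\int_0^1 \mathbf 1\{\theta>1/2\}\,f_{\alpha,\beta}(\theta)\,d\theta = P(\Theta>1/2)=1-I_{1/2}(\alpha,\beta),
\]
with $I_x(a,b)$ the regularized incomplete beta function. It then remains to prove the strict inequality $P(\Theta>1/2)>\alpha/(\alpha+\beta)=\mathbb E[\Theta]$, equivalently $\mathbb E[h(\Theta)]>0$ where $h(\theta):=\mathbf 1\{\theta>1/2\}-\theta$.

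\emph{The core inequality.} Here I would exploit the antisymmetry $h(1-\theta)=-h(\theta)$ for $\theta\neq1/2$ together with the reflection identity $f_{\alpha,\beta}(1-\theta)=f_{\beta,\alpha}(\theta)$ (using $B(\alpha,\beta)=B(\beta,\alpha)$). Substituting $\theta\mapsto1-\theta$ on $(1/2,1)$ folds the integral onto $(0,1/2)$, and on $(0,1/2)$ one has $h(\theta)=-\theta$, so
\[
\mathbb E[h(\Theta)]=\int_0^{1/2}(-\theta)\,\bigl[f_{\alpha,\beta}(\theta)-f_{\beta,\alpha}(\theta)\bigr]\,d\theta=\frac{1}{B(\alpha,\beta)}\int_0^{1/2}\theta^{\beta}(1-\theta)^{\beta-1}\bigl[(1-\theta)^{\alpha-\beta}-\theta^{\alpha-\beta}\bigr]\,d\theta.
\]
On $(0,1/2)$ we have $0<\theta<1-\theta$ and, because $\alpha>\beta$, the exponent $\alpha-\beta$ is positive, so the bracket is strictly positive; the integrand is therefore strictly positive a.e.\ on $(0,1/2)$ and the integral is $>0$. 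Hence $P(\Theta>1/2)>\alpha/(\alpha+\beta)$, which combined with the limit above proves part (b). I expect the only genuinely delicate points to be the two routine technical checks in the probabilistic step (the LLN with the $\lfloor n/2\rfloor+1$ threshold, and the null-set at $\theta=1/2$); the mathematical heart of the argument, and the unique place where $\alpha>\beta$ enters, is the sign analysis in the last display.
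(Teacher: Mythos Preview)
Your proposal is correct and follows essentially the same route as the paper: part~(a) matches the paper's Proposition~2 (birth--death chain, detailed balance yielding the Beta--Binomial law), and part~(b) mirrors Propositions~3 and~4 (mixture representation plus LLN for the limit $1-I_{1/2}(\alpha,\beta)$, then the reflection identity $f_{\alpha,\beta}(1-\theta)=f_{\beta,\alpha}(\theta)$ to fold the integral onto $(0,1/2)$ and read off the sign). The only cosmetic differences are that you pass to the limit via bounded convergence on the conditional probabilities $\phi_n(\theta)$ rather than via convergence in distribution of $X/n$ and continuity of the Beta cdf at $1/2$, and that you factor the final integrand as $\theta^{\beta}(1-\theta)^{\beta-1}\bigl[(1-\theta)^{\alpha-\beta}-\theta^{\alpha-\beta}\bigr]$ instead of computing the density ratio $\bigl((1-\theta)/\theta\bigr)^{\alpha-\beta}$; both presentations are equivalent and the strict inequality enters at exactly the same place.
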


The next proposition establishes part (a) of Theorem 1.\par\medskip

\begin{proposition}
	\label{proposition1}
	(Stationary distribution for the binary zealot voter model on fully mixed networks).
	Let $n\in\mathbb{N}$, $\alpha,\beta>0$, and $\ell\in[0,1)$. Consider the Markov chain $\{X_t\}_{t\ge0}$ described in Theorem 1. Then $\{X_t\}$ is an irreducible, aperiodic birth--death chain whose unique stationary distribution is
	\[
	\pi(k)=\binom{n}{k}\,\frac{(\alpha)_k\,(\beta)_{n-k}}{(\alpha+\beta)_n}
	=\frac{\displaystyle\binom{\alpha+k-1}{k}\,\binom{n+\beta-k-1}{\,n-k\,}}
	{\displaystyle\binom{n+\alpha+\beta-1}{\,n\,}},\qquad k=0,1,\dots,n,
	\]
	where $(a)_m:=a(a+1)\cdots(a+m-1)$ is the Pochhammer (rising) factorial with $(a)_0:=1$. The generalized binomial form extends to all real $\alpha,\beta>0$ via the gamma function. For $\alpha=0$ (resp.\ $\beta=0$), the unique stationary distribution is the point mass at $k=0$ (resp.\ at $k=n$).
\end{proposition}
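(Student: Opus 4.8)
The plan is to treat $\{X_t\}$ directly as a birth--death chain on $\{0,1,\dots,n\}$, establish irreducibility and aperiodicity from its one-step probabilities, and then recover the (necessarily unique) stationary law by the standard detailed-balance recursion, closing with a Chu--Vandermonde identity for the normalizing constant.

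First I would read off the transition probabilities from the evolution equation of Section~\ref{sec2}. Setting $c:=\tfrac{1-\ell}{n(n+\alpha+\beta-1)}$, the only moves out of state $j$ are $j\to j+1$ with probability $b_j:=c\,(n-j)(j+\alpha)$ (a state-$0$ free voter is picked and imitates a state-$1$ neighbor), $j\to j-1$ with probability $d_j:=c\,j\,(n-j+\beta)$ (a state-$1$ free voter is picked and imitates a state-$0$ neighbor), and $j\to j$ with the complementary probability $r_j:=1-b_j-d_j\ge\ell$; one checks directly that $r_j$ matches the coefficient of $P_t(k)$ in the displayed recursion. Since $\ell<1$ and $\alpha,\beta>0$, we have $b_j>0$ and $d_j>0$ for all interior $0<j<n$, together with $b_0=c\,n\alpha>0$ and $d_n=c\,n\beta>0$; hence all states communicate and the chain is irreducible on $\{0,\dots,n\}$. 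For aperiodicity it suffices to exhibit one self-loop: $r_0=1-b_0=1-\tfrac{(1-\ell)\alpha}{n+\alpha+\beta-1}>0$, since $n\ge1$ and $\beta>0$ give $n+\alpha+\beta-1\ge\alpha+\beta>(1-\ell)\alpha$. A finite irreducible chain has a unique stationary distribution, which settles existence and uniqueness; aperiodicity additionally yields $P_t\to\pi$ regardless of the starting configuration, and the whole argument is manifestly independent of $\ell$.

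Next I would exploit reversibility. Every birth--death chain is reversible, so $\pi$ is the unique probability vector satisfying detailed balance $\pi(j)\,b_j=\pi(j+1)\,d_{j+1}$ for $0\le j\le n-1$. This gives $\dfrac{\pi(j+1)}{\pi(j)}=\dfrac{b_j}{d_{j+1}}=\dfrac{(n-j)(j+\alpha)}{(j+1)(n-1-j+\beta)}$, and telescoping from $k=0$, after grouping the four products via $\prod_{j=0}^{k-1}(n-j)=\tfrac{n!}{(n-k)!}$, $\prod_{j=0}^{k-1}(j+1)=k!$, $\prod_{j=0}^{k-1}(j+\alpha)=(\alpha)_k$, and $\prod_{j=0}^{k-1}(n-1-j+\beta)=\tfrac{(\beta)_n}{(\beta)_{n-k}}$, yields
\[
\pi(k)=\pi(0)\,\binom{n}{k}\,\frac{(\alpha)_k\,(\beta)_{n-k}}{(\beta)_n},\qquad k=0,1,\dots,n .
\]
Imposing $\sum_{k=0}^n\pi(k)=1$ then forces $\pi(0)=\tfrac{(\beta)_n}{(\alpha+\beta)_n}$, precisely because of the Chu--Vandermonde summation $\sum_{k=0}^n\binom{n}{k}(\alpha)_k(\beta)_{n-k}=(\alpha+\beta)_n$ (equivalently, the Cauchy product of $(1-x)^{-\alpha}(1-x)^{-\beta}=(1-x)^{-\alpha-\beta}$). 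Substituting back gives $\pi(k)=\binom{n}{k}(\alpha)_k(\beta)_{n-k}/(\alpha+\beta)_n$, the $\mathrm{Beta\text{-}Binomial}(n;\alpha,\beta)$ mass function; the generalized-binomial form in the statement follows from $\binom{\alpha+k-1}{k}=(\alpha)_k/k!$, $\binom{n+\beta-k-1}{n-k}=(\beta)_{n-k}/(n-k)!$ and $\binom{n+\alpha+\beta-1}{n}=(\alpha+\beta)_n/n!$, and this rewriting is valid for all real $\alpha,\beta>0$ through the $\Gamma$ function.

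Finally, for the degenerate cases: if $\alpha=0$ then $b_0=c\,n\cdot 0=0$, so $\{0\}$ is absorbing, while $b_j>0$ and $d_j>0$ for all $0<j<n$ keep every other state transient and communicating with $0$; hence $\pi=\delta_0$, and symmetrically $\beta=0$ gives $\pi=\delta_n$. I do not expect a genuine obstacle: the only non-mechanical ingredient is the Chu--Vandermonde identity supplying the normalizing constant, and the only points needing care are the boundary behaviour in the irreducibility/aperiodicity check at $k\in\{0,n\}$ and the off-by-one bookkeeping in the telescoped product.
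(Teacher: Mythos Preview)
Your proposal is correct and follows essentially the same route as the paper: identify $\{X_t\}$ as a birth--death chain, check irreducibility/aperiodicity from the one-step rates, solve the detailed-balance recursion by telescoping, and normalize via the Chu--Vandermonde identity. The only cosmetic difference is that the paper first writes $P=\ell I+(1-\ell)Q$ to strip the $\ell$-dependence before doing detailed balance, whereas you carry the common factor $c$ through and let it cancel in the ratio $b_j/d_{j+1}$; both arrive at the same ratio and the same Beta--Binomial law.
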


\begin{proof}
	Because the interaction graph is complete, free voters are exchangeable and $\{X_t\}$ is a birth--death chain. From a state with $k$ ones among the $n$ free voters, the one-step probabilities are
	\[
	P_{k,k+1}=(1-\ell)\,\frac{n-k}{n}\cdot\frac{k+\alpha}{\,n-1+\alpha+\beta\,},\qquad
	P_{k,k-1}=(1-\ell)\,\frac{k}{n}\cdot\frac{n-k+\beta}{\,n-1+\alpha+\beta\,},
	\]
	and $P_{k,k}=1-P_{k,k+1}-P_{k,k-1}$, with the obvious boundary conventions. For $\alpha,\beta>0$ and $\ell<1$, we have $P_{k,k+1}>0$ for $k<n$ and $P_{k,k-1}>0$ for $k>0$, hence irreducible and aperiodic.
	
	Write the transition kernel as
	\[
	P=\ell I+(1-\ell)Q,
	\]
	where $Q$ is the \emph{copy-step} kernel (set $\ell=0$ in the above formulas), i.e.
	\[
	Q_{k,k+1}=\frac{n-k}{n}\cdot\frac{k+\alpha}{\,n-1+\alpha+\beta\,},\qquad
	Q_{k,k-1}=\frac{k}{n}\cdot\frac{n-k+\beta}{\,n-1+\alpha+\beta\,},\qquad
	Q_{k,k}=1-Q_{k,k+1}-Q_{k,k-1}.
	\]
	Since $P=\ell I+(1-\ell)Q$, if $\pi Q=\pi$ then $\pi P=\ell\pi+(1-\ell)\pi=\pi$. Hence the stationary distribution does not depend on $\ell$, and it suffices to determine the stationary law for $Q$.
	
	For a birth--death chain, stationarity is equivalent to detailed balance:
	\[
	\pi(k)\,Q_{k,k+1}=\pi(k+1)\,Q_{k+1,k}\qquad (k=0,1,\dots,n-1).
	\]
	Therefore
	\[
	\frac{\pi(k+1)}{\pi(k)}=\frac{Q_{k,k+1}}{Q_{k+1,k}}
	=\frac{(n-k)(k+\alpha)}{(k+1)(n-k-1+\beta)}.
	\]
	Solving this first-order recurrence yields
	\[
	\pi(k)=\pi(0)\,\prod_{i=0}^{k-1}\frac{(n-i)(i+\alpha)}{(i+1)(n-i-1+\beta)}
	=\pi(0)\,\binom{n}{k}\,(\alpha)_k\,\frac{(\beta)_{n-k}}{(\beta)_n}.
	\]
	To identify $\pi(0)$, use Vandermonde's identity in Pochhammer form,
	\[
	\sum_{k=0}^{n}\binom{n}{k}\,(\alpha)_k\,(\beta)_{n-k}=(\alpha+\beta)_n,
	\]
	which follows by multiplying the binomial series $(1-z)^{-\alpha}=\sum_{k\ge0}(\alpha)_k z^k/k!$ and $(1-z)^{-\beta}=\sum_{m\ge0}(\beta)_m z^m/m!$ and equating the $z^n$ coefficients after multiplying by $n!$. Hence
	\[
	\pi(k)=\binom{n}{k}\,\frac{(\alpha)_k\,(\beta)_{n-k}}{(\alpha+\beta)_n}.
	\]
	Finally, using $(a)_m=\Gamma(a+m)/\Gamma(a)=m!\,\binom{a+m-1}{m}$ gives the equivalent generalized binomial form displayed in the statement. Uniqueness of the stationary distribution and convergence to it follow from the chain's irreducibility and aperiodicity; the degenerate cases $\alpha=0$ or $\beta=0$ are immediate from the transition rules.
\end{proof}

Part (b) of Theorem \ref{theorem1} follows from the next two propositions. To analyze the asymptotic regime $n\to\infty$, let $V=X/n$ denote the stationary fraction of free voters choosing the correct alternative. We then establish the following result:\par\medskip

\begin{proposition}
	\label{proposition3}
	(Beta limit for the stationary vote share).
	Fix $\alpha,\beta>0$. On the complete graph with $n$ free voters, let $X$ denote
	(at stationarity) the number of free voters who select the correct alternative, and
	set $V_n:=X/n$. Then
	\[
	V_n \overset{p}{\longrightarrow} V, \qquad V \sim \mathrm{Beta}(\alpha,\beta).
	\]
	
	Consequently,
	\[
	\lim_{n\to\infty} P_n(\text{correct decision})
	= \lim_{n\to\infty} P\!\big(V_n\ge \tfrac12\big)
	= P\!\big(V\ge \tfrac12\big)
	= 1 - I_{1/2}(\alpha,\beta),
	\]
	where $I_x(\alpha,\beta)$ is the \emph{regularized incomplete Beta function} \cite{r41},
	\[
	I_x(\alpha,\beta)
	:= \frac{1}{B(\alpha,\beta)}\int_0^x t^{\alpha-1}(1-t)^{\beta-1}\,dt,
	\qquad
	B(\alpha,\beta)=\frac{\Gamma(\alpha)\Gamma(\beta)}{\Gamma(\alpha+\beta)}.
	\]
\end{proposition}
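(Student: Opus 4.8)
The plan is to recognize the stationary law of Proposition~\ref{proposition1} as a Beta--Binomial and exploit its hierarchical (de Finetti) representation. Using $(\alpha)_k=\Gamma(\alpha+k)/\Gamma(\alpha)$ and the analogous identities, the pmf $\pi(k)=\binom{n}{k}(\alpha)_k(\beta)_{n-k}/(\alpha+\beta)_n$ rewrites as $\binom{n}{k}\,B(k+\alpha,\,n-k+\beta)/B(\alpha,\beta)$, i.e.\ $X\sim\mathrm{Beta\text{-}Binomial}(n;\alpha,\beta)$. By definition this is the law obtained by first drawing $P\sim\mathrm{Beta}(\alpha,\beta)$ and then $X\mid P\sim\mathrm{Binomial}(n,P)$.

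I would then realize all the $X_n$ on a common probability space: take $P\sim\mathrm{Beta}(\alpha,\beta)$ and, conditionally on $P$, an i.i.d.\ sequence $\xi_1,\xi_2,\dots$ of $\mathrm{Bernoulli}(P)$ variables, and set $X_n:=\xi_1+\cdots+\xi_n$. For each $n$ the marginal of $X_n$ is exactly $\pi$ (hence coincides with the stationary law of $\{X_t\}$), while the conditional strong law of large numbers gives $V_n=X_n/n\to P$ almost surely (the conditioning is legitimate since $P\in(0,1)$ a.s.). Setting $V:=P$ gives $V_n\to V$ a.s., a fortiori $V_n\xrightarrow{p}V$ with $V\sim\mathrm{Beta}(\alpha,\beta)$. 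As an alternative one can bypass the coupling by the method of moments: the factorial moments of the Beta--Binomial give $\mathbb{E}\big[X(X-1)\cdots(X-r+1)\big]=\tfrac{n!}{(n-r)!}\cdot\tfrac{(\alpha)_r}{(\alpha+\beta)_r}$, so $\mathbb{E}[V_n^{\,r}]\to(\alpha)_r/(\alpha+\beta)_r$, which are precisely the moments of $\mathrm{Beta}(\alpha,\beta)$; a compactly supported law being determined by its moments, $V_n\Rightarrow V$. This suffices for the accuracy limit but yields only convergence in distribution, so I prefer the coupling for the stated "$\xrightarrow{p}$''.

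For the accuracy statement, write the simple-majority threshold as $c_n:=(\lfloor n/2\rfloor+1)/n$, and note $c_n>\tfrac12$ for every $n$ while $c_n\to\tfrac12$. Since $\mathrm{Beta}(\alpha,\beta)$ with $\alpha,\beta>0$ has a density, $P(V=\tfrac12)=0$; combining this with $V_n\to V$ a.s.\ and $c_n\to\tfrac12$ shows $\mathbf{1}\{V_n\ge c_n\}\to\mathbf{1}\{V\ge\tfrac12\}$ a.s.\ (on $\{V>\tfrac12\}$ one has $V_n>c_n$ eventually, and on $\{V<\tfrac12\}$ one has $V_n<c_n$ eventually; $\{V=\tfrac12\}$ is null). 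By bounded convergence, $P_n(\text{correct decision})=P(V_n\ge c_n)\to P(V\ge\tfrac12)=1-F_{\mathrm{Beta}(\alpha,\beta)}(\tfrac12)=1-I_{1/2}(\alpha,\beta)$, using that the Beta CDF is the regularized incomplete Beta function; the same argument with the cutoff $\tfrac12$ in place of $c_n$ shows $P(V_n\ge\tfrac12)$ has the identical limit, so the two displayed limits agree.

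The only mildly delicate point is the bookkeeping around the moving threshold $c_n$ versus the fixed cutoff $\tfrac12$, together with the observation that the limiting law has no atom at $\tfrac12$; everything else is a routine invocation of the SLLN (or moment convergence) and bounded convergence. I do not expect a genuine obstacle beyond phrasing the coupling cleanly so that "$V_n\xrightarrow{p}V$'' is literally meaningful.
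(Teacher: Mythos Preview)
Your proof is correct and follows essentially the same route as the paper: the Beta--Binomial mixture representation, a coupling with $V\sim\mathrm{Beta}(\alpha,\beta)$ and conditionally i.i.d.\ Bernoulli trials, the conditional (strong) law of large numbers to get $V_n\to V$, and continuity of the Beta cdf at $\tfrac12$. You are in fact more careful than the paper about the moving threshold $c_n=(\lfloor n/2\rfloor+1)/n$ versus the fixed cutoff $\tfrac12$, and your moment-method aside is a nice alternative for the distributional limit.
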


\begin{proof}
	By Theorem~1(a), at stationarity $X$ has the Beta--Binomial$(n;\alpha,\beta)$ law.
	Equivalently, $X$ admits the mixture representation \cite{r41}:
	\[
	V \sim \mathrm{Beta}(\alpha,\beta), \qquad X \mid V \sim \mathrm{Binomial}(n,V),
	\]
	with $V$ independent of the binomial sampling. Define $V_n:=X/n$.
	
	Given $V$, $X$ is the sum of $n$ i.i.d.\ $\mathrm{Bernoulli}(V)$ trials, so by the (conditional) law of large numbers,
	\[
	V_n=\frac{X}{n} \;\xrightarrow{\ \text{a.s.}\ }\; V \quad\text{(given $V$)}.
	\]
	Therefore, unconditionally,
	\[
	V_n \overset{p}{\longrightarrow} V .
	\]
	
	Since $V\sim\mathrm{Beta}(\alpha,\beta)$, this convergence in probability implies
	\[
	V_n \overset{d}{\longrightarrow} V \ \sim\ \mathrm{Beta}(\alpha,\beta).
	\]
	Because the Beta cdf is continuous at $1/2$,
	\begin{equation*}
\lim_{n\to\infty} P\!\big(V_n\ge \tfrac12\big)
= P\!\big(V\ge \tfrac12\big)
= 1 - I_{1/2}(\alpha,\beta).
\end{equation*}

\end{proof}
	
Recall that \(\mu:=\alpha/(\alpha+\beta)\) is the individual (private-signal) accuracy—the
probability that a free voter selects the correct alternative (state \(1\)). For general
\(\alpha,\beta\) with \(\mu>1/2\), the limiting majority accuracy is
\[
\lim_{n\to\infty} P_n(\text{correct decision})
= P\!\left(V\ge \tfrac12\right)
= 1 - I_{1/2}(\alpha,\beta)\in(1/2,1),
\]
so it need not equal \(1\) (in contrast with the classical Condorcet Jury Theorem under
independent votes). Nevertheless, we show below that it strictly exceeds the individual
accuracy:
\[
\lim_{n\to\infty} P_n(\text{correct decision})
= P\!\left(V\ge \tfrac12\right)\;>\;\frac{\alpha}{\alpha+\beta}.
\]
This bound is established analytically below, proving part (b) of Theorem~1.\par\medskip

\begin{proposition}
	\label{proposition4}
(Collective accuracy exceeds individual accuracy).
Let $\alpha, \beta > 0$ with $\alpha > \beta$. Then
\[
P(V \geq \tfrac{1}{2}) \;=\; 1 - I_{1/2}(\alpha, \beta) \;>\; \frac{\alpha}{\alpha + \beta},
\]
which is equivalent to
\[
P(V \leq \tfrac{1}{2}) \;=\; I_{1/2}(\alpha, \beta) \;<\; \frac{\beta}{\alpha + \beta}.
\]	
\end{proposition}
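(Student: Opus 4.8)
The plan is to convert the claim into a statement about the mean of $V$ and exploit the reflection symmetry of the Beta density about $1/2$. Recall that $V\sim\mathrm{Beta}(\alpha,\beta)$ has density $f(v)=v^{\alpha-1}(1-v)^{\beta-1}/B(\alpha,\beta)$ on $(0,1)$ and mean $\mathbb{E}[V]=\alpha/(\alpha+\beta)$. Since $P(V\ge\tfrac12)=1-I_{1/2}(\alpha,\beta)$ and $1-\alpha/(\alpha+\beta)=\beta/(\alpha+\beta)$, the two displayed inequalities are equivalent, so it suffices to prove $P(V\ge\tfrac12)>\mathbb{E}[V]$.

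First I would rewrite the gap $P(V\ge\tfrac12)-\mathbb{E}[V]$ in a form in which the sign is transparent. Using $\int_0^1 f=1$,
\[
P(V\ge\tfrac12)-\mathbb{E}[V]=\int_{1/2}^1 (1-v)\,f(v)\,dv-\int_0^{1/2} v\,f(v)\,dv .
\]
In the first integral I substitute $v=1-u$, turning it into $\int_0^{1/2} u\,f(1-u)\,du$, so that
\[
P(V\ge\tfrac12)-\mathbb{E}[V]=\int_0^{1/2} u\,\bigl(f(1-u)-f(u)\bigr)\,du .
\]

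Next I would determine the sign of $f(1-u)-f(u)$ on $(0,\tfrac12)$. A direct computation gives $f(1-u)/f(u)=\bigl((1-u)/u\bigr)^{\alpha-\beta}$, which is strictly greater than $1$ whenever $0<u<\tfrac12$, precisely because $\alpha>\beta$. Hence the integrand $u\,(f(1-u)-f(u))$ is strictly positive on $(0,\tfrac12)$ and integrable there (near $u=0$ it is $O(u^{\beta})$, with $\beta>0$), so the integral is strictly positive. This gives $P(V\ge\tfrac12)>\mathbb{E}[V]=\alpha/(\alpha+\beta)$, and the equivalent form $I_{1/2}(\alpha,\beta)<\beta/(\alpha+\beta)$ follows at once from $P(V\le\tfrac12)=I_{1/2}(\alpha,\beta)$.

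I do not anticipate a serious obstacle. The one thing to be careful about is that a naïve comparison of cdf values does not directly yield the bound; the point is to compare a \emph{tail probability} with the mean via the integrated-tail identity above and then apply the reflection $v\mapsto 1-u$. The positivity of $(1-u)^{\alpha-\beta}-u^{\alpha-\beta}$ on $(0,\tfrac12)$ is exactly where the hypothesis $\alpha>\beta$ is used, and the boundary cases $\alpha<1$ or $\beta<1$ cause no trouble since the integrand still vanishes at $u=0$ and is integrable on $(0,\tfrac12)$.
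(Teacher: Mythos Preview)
Your proposal is correct and follows essentially the same approach as the paper: both rewrite $P(V\ge\tfrac12)-\mathbb{E}[V]$ as $\int_0^{1/2} u\bigl(f(1-u)-f(u)\bigr)\,du$ via the substitution $v\mapsto 1-u$, then use the pointwise ratio $f(1-u)/f(u)=\bigl((1-u)/u\bigr)^{\alpha-\beta}>1$ on $(0,\tfrac12)$ to conclude. The paper phrases the reflection as $f_{\alpha,\beta}(1-u)=f_{\beta,\alpha}(u)$, but the content is identical; your added remark on integrability near $u=0$ is a small bonus the paper omits.
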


\begin{proof}
	Let $V\sim\mathrm{Beta}(\alpha,\beta)$ with density
	\[
	f_{\alpha,\beta}(u)\;=\;\frac{u^{\alpha-1}(1-u)^{\beta-1}}{B(\alpha,\beta)},\qquad u\in(0,1).
	\]
	We use the symmetry $f_{\alpha,\beta}(1-u)=f_{\beta,\alpha}(u)$ and $B(\alpha,\beta)=B(\beta,\alpha)$.
	
	Since the beta law is continuous, $P(V\ge 1/2)=P(V>1/2)$. Consider
	\[
	P(V>1/2)-\mathbb{E}[V]
	=\int_{1/2}^{1} f_{\alpha,\beta}(u)\,du-\int_0^1 u\,f_{\alpha,\beta}(u)\,du.
	\]
	Split the expectation at $1/2$ and regroup:
	\[
	P(V>1/2)-\mathbb{E}[V]
	= \int_{1/2}^{1} (1-u)\,f_{\alpha,\beta}(u)\,du
	- \int_{0}^{1/2} u\,f_{\alpha,\beta}(u)\,du.
	\]
	In the first integral substitute $v=1-u$ (so $u\in[1/2,1]\iff v\in[1/2,0]$, $du=-dv$):
	\[
	\int_{1/2}^{1} (1-u)\,f_{\alpha,\beta}(u)\,du
	= \int_{0}^{1/2} v\, f_{\alpha,\beta}(1-v)\,dv
	= \int_{0}^{1/2} v\, f_{\beta,\alpha}(v)\,dv.
	\]
	Therefore
	\[
	P(V>1/2)-\mathbb{E}[V]
	= \int_{0}^{1/2} v\,[\,f_{\beta,\alpha}(v)-f_{\alpha,\beta}(v)\,]\,dv.
	\]
	Compute the pointwise ratio (the beta constants cancel):
	\[
	\frac{f_{\beta,\alpha}(v)}{f_{\alpha,\beta}(v)}
	=\frac{v^{\beta-1}(1-v)^{\alpha-1}}{v^{\alpha-1}(1-v)^{\beta-1}}
	=\Big(\frac{1-v}{v}\Big)^{\alpha-\beta}.
	\]
	For $v\in(0,1/2)$ we have $(1-v)/v>1$, and since $\alpha>\beta$, it follows that
	$f_{\beta,\alpha}(v)-f_{\alpha,\beta}(v)>0$ on $(0,1/2)$. Because $v>0$ there as well, the integrand is strictly positive, hence
	\[
	P(V>1/2)-\mathbb{E}[V]\;>\;0.
	\]
	Finally, $\mathbb{E}[V]=\alpha/(\alpha+\beta)$, so
	\[
	P(V>1/2)\;>\;\frac{\alpha}{\alpha+\beta}.
	\]
	Equivalently, using $P(V\le 1/2)=I_{1/2}(\alpha,\beta)$,
	\[
	P(V\ge \tfrac12)=1-I_{1/2}(\alpha,\beta)\;>\;\frac{\alpha}{\alpha+\beta},
	\qquad
	P(V\le \tfrac12)=I_{1/2}(\alpha,\beta)\;<\;\frac{\beta}{\alpha+\beta}.
	\] 
\end{proof}


\begin{remark}
	\label{remark1}
	Proposition \ref{proposition4} shows that when individual signals are even slightly biased toward
	the correct alternative (\(\mu>1/2\)), the majority's probability of choosing
	correctly \emph{strictly exceeds} that of any single voter. In other words,
	weakly informative signals, coupled through social interaction, are amplified at
	the group level.
\end{remark}

\begin{remark}
	\label{remark2}
	It is instructive to observe that the classical Condorcet Jury Theorem (CJT) is
	recovered from Proposition \ref{proposition3} in a suitable limit. Consider the \emph{many fixed
		voters} regime in which \(\alpha,\beta\gg 1\) while the ratio
	\[
	\mu:=\frac{\alpha}{\alpha+\beta}\in(0,1)
	\]
	is held fixed. In this regime, the influence of the external (fixed) voters
	dominates the dynamics of the free voters \cite{r35}; dependencies among
	free-voter decisions become negligible, so the free voters behave as if
	independent. For the stationary vote share \(V\sim\mathrm{Beta}(\alpha,\beta)\),
	\[
	\operatorname{Var}(V)
	=\frac{\alpha\beta}{(\alpha+\beta)^2(\alpha+\beta+1)}
	\longrightarrow 0 \qquad (\alpha+\beta\to\infty).
	\]
	By Chebyshev's inequality, for any \(\varepsilon>0\),
	\[
	P\!\big(|V-\mu|>\varepsilon\big)
	\le \frac{\operatorname{Var}(V)}{\varepsilon^2}
	\longrightarrow 0 ,
	\]
	so \(V \xrightarrow{\ p\ } \mu\) (i.e., \(V\) concentrates at \(\mu\)). Consequently,
	\[
	P\!\left(V\ge \tfrac12\right) \longrightarrow 1
	\quad \text{whenever } \mu>\tfrac12,
	\]
	thereby recovering the classical CJT in this limit.
	
\end{remark}

\subsection{Multistate zealots ($m \geq 2$)}
\label{subsec:multistate}
\begin{theorem}
	\label{theorem7}
	(Generalized CJT on fully mixed networks with $m$ zealot types).
	Let $m\ge 2$, $n\in\mathbb{N}$, and $\alpha_i>0$ for $i=1,\dots,m$. Consider a complete graph with $n+\sum_{i=1}^m \alpha_i$ nodes, where $n$ are free voters and, for each $i$, there are $\alpha_i$ zealots fixed in state $i$. At each time step, a free node is selected uniformly at random; with probability $\ell\in[0,1)$ (\textit{retention/inertia}) it keeps its current state, and with probability $1-\ell$ (\textit{imitation}) it copies the state of a uniformly chosen neighbor (free or zealot). \emph{State $1$ is the correct alternative.} Let
	\[
	X_t=(X_{t,1},\dots,X_{t,m})\in \mathcal{S}_n:=\Big\{k\in\mathbb{Z}_{\ge0}^m:\ \sum_{i=1}^m k_i=n\Big\}
	\]
	denote the vector of free-voter counts in each state at time $t$, and write $\alpha_0:=\sum_{i=1}^m \alpha_i$.
	
	\begin{enumerate}[label=(\alph*), leftmargin=2em]
		
		\item The chain $\{X_t\}_{t\ge0}$ is irreducible and aperiodic on $\mathcal{S}_n$, with a unique stationary distribution (independent of $\ell$). In stationarity, \(X \sim \mathrm{Dirichlet\text{-}Multinomial}(n;\alpha_1,\dots,\alpha_m)\);
		for each \(j\), \(X_j \sim \mathrm{Beta\mbox{-}Binomial}(n;\alpha_j,\alpha_0-\alpha_j)\).

		\item Let $\mathrm{Maj}_n=\{X_1\ge \lfloor n/2\rfloor+1\}$ denote the event that a strict majority of free voters choose the correct state. Then, in exact finite-$n$ terms,
		\[
		\begin{aligned}
			P(\mathrm{Maj}_n)
			&= \sum_{k=\lfloor n/2\rfloor+1}^{n}
			\binom{n}{k}\,
			\frac{(\alpha_1)_k\,(\alpha_0-\alpha_1)_{\,n-k}}{(\alpha_0)_{\,n}} \\
			&= \sum_{k=\lfloor n/2\rfloor+1}^{n}
			\binom{n}{k}\;
			\frac{\displaystyle \binom{\alpha_1+k-1}{\,k\,}\;
				\binom{n+\alpha_0-\alpha_1-k-1}{\,n-k\,}}
			{\displaystyle \binom{\alpha_0+n-1}{\,n\,}}\,.
		\end{aligned}
		\]
		
		\noindent Moreover,
		\[
		\lim_{n\to\infty} P(\mathrm{Maj}_n)
		= 1 - I_{1/2}\!\big(\alpha_1,\alpha_0-\alpha_1\big).
		\]
		If \(\alpha_1 > \alpha_0 - \alpha_1\), then \(\lim_{n\to\infty} P(\mathrm{Maj}_n) > \alpha_1/\alpha_0\), where \(\alpha_1/\alpha_0\) is the probability that a free, independent voter selects the correct alternative \(1\) (see Remark~3).

		\item Let \(\mathrm{Plu}_n\) denote the event that state \(1\) wins by plurality, that is,
		\(\mathrm{Plu}_n=\{\,X_1>\max_{j\neq 1} X_j\,\}\). Then
		\[
		\lim_{n\to\infty} P(\mathrm{Plu}_n)
		\;\ge\;
		\max\!\left\{\,1-I_{1/2}\!\big(\alpha_1,\alpha_0-\alpha_1\big),\ 
		1-\sum_{j=2}^{m} I_{1/2}\!\big(\alpha_1,\alpha_j\big)\,\right\}.
		\]
		If \(\alpha_1 > \alpha_0 - \alpha_1\), then
		\(\lim\limits_{n\to\infty} P(\mathrm{Plu}_n) \ge 1 - I_{1/2}(\alpha_1,\alpha_0-\alpha_1) > \alpha_1/\alpha_0\), with \(\alpha_1/\alpha_0\) interpreted as in part~(b).
	\end{enumerate}
	
\end{theorem}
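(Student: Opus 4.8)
The plan is to treat the three parts in turn, with part (c) carrying the genuinely new content while parts (a)--(b) reduce essentially to the binary machinery of Propositions 1, 3, and 4. \textbf{Part (a).} The chain moves only along edges $k\mapsto k+e_j-e_i$ ($i\ne j$), taken when the selected free voter (in state $i$, probability $k_i/n$) imitates (probability $1-\ell$) a neighbour in state $j$, which on the complete graph occurs with conditional probability $(k_j+\alpha_j)/(n+\alpha_0-1)$. First I would record these one-step probabilities, then \emph{guess} $\pi(k)=\binom{n}{k_1,\dots,k_m}\prod_i(\alpha_i)_{k_i}/(\alpha_0)_n$ (the Dirichlet--Multinomial pmf) and verify detailed balance $\pi(k)\,P_{k\to k+e_j-e_i}=\pi(k+e_j-e_i)\,P_{k+e_j-e_i\to k}$. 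After cancelling $(1-\ell)$, $n$ and $(n+\alpha_0-1)$, this reduces to the Pochhammer identity $\pi(k+e_j-e_i)/\pi(k)=\tfrac{k_i}{k_j+1}\cdot\tfrac{k_j+\alpha_j}{k_i-1+\alpha_i}=P_{k\to k+e_j-e_i}/P_{k+e_j-e_i\to k}$, which is routine bookkeeping. Since the only non-self transitions have this adjacent form, pairwise detailed balance on every such edge immediately yields $\sum_k\pi(k)P(k,k')=\pi(k')$, i.e.\ stationarity of $\pi$. Irreducibility holds because $\alpha_j>0$ makes each adjacent transition positive, so $\mathcal{S}_n$ is connected under these moves; aperiodicity holds since $P(k,k)>0$ (a selected voter may copy a like-state zealot, which exists as $\alpha_i>0$); uniqueness then follows. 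For the marginals I would invoke the mixture representation $(p_1,\dots,p_m)\sim\mathrm{Dirichlet}(\alpha_1,\dots,\alpha_m)$, $X\mid p\sim\mathrm{Multinomial}(n,p)$, together with the Dirichlet aggregation property $p_j\sim\mathrm{Beta}(\alpha_j,\alpha_0-\alpha_j)$, to conclude $X_j\sim\mathrm{Beta\text{-}Binomial}(n;\alpha_j,\alpha_0-\alpha_j)$.

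\textbf{Part (b).} The exact finite-$n$ formula is immediate: by part (a), $X_1\sim\mathrm{Beta\text{-}Binomial}(n;\alpha_1,\alpha_0-\alpha_1)$, so $P(\mathrm{Maj}_n)=P(X_1\ge\lfloor n/2\rfloor+1)$ is exactly the displayed sum of that pmf. For the limit I would rerun the argument of Proposition 3 with $(\alpha,\beta)=(\alpha_1,\alpha_0-\alpha_1)$: $X_1/n\xrightarrow{p}V_1\sim\mathrm{Beta}(\alpha_1,\alpha_0-\alpha_1)$, and since $(\lfloor n/2\rfloor+1)/n\to\tfrac12$ and the Beta cdf is continuous at $\tfrac12$, $P(\mathrm{Maj}_n)\to P(V_1\ge\tfrac12)=1-I_{1/2}(\alpha_1,\alpha_0-\alpha_1)$. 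The strict inequality $1-I_{1/2}(\alpha_1,\alpha_0-\alpha_1)>\alpha_1/\alpha_0$ when $\alpha_1>\alpha_0-\alpha_1$ is Proposition 4 with $(\alpha,\beta)=(\alpha_1,\alpha_0-\alpha_1)$.

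\textbf{Part (c).} I would prove the two lower bounds separately and take their maximum. For the first, if $X_1\ge\lfloor n/2\rfloor+1>n/2$ then $X_1>n-X_1=\sum_{j\ne1}X_j\ge\max_{j\ne1}X_j$, so $\mathrm{Maj}_n\subseteq\mathrm{Plu}_n$ and hence $\liminf_n P(\mathrm{Plu}_n)\ge\lim_n P(\mathrm{Maj}_n)=1-I_{1/2}(\alpha_1,\alpha_0-\alpha_1)$ by part (b). For the second, $\mathrm{Plu}_n^{c}\subseteq\bigcup_{j=2}^m\{X_j\ge X_1\}$, so by the union bound it suffices to show $P(X_j\ge X_1)\to I_{1/2}(\alpha_1,\alpha_j)$ for each $j\ne1$. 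Lumping all states other than $1$ and $j$, part (a) gives $(X_1,X_j)$ as the leading coordinates of a $\mathrm{Dirichlet\text{-}Multinomial}(n;\alpha_1,\alpha_j,\alpha_0-\alpha_1-\alpha_j)$ vector; conditioning on the subtotal $X_1+X_j=s$ gives $X_1\mid\{X_1+X_j=s\}\sim\mathrm{Beta\text{-}Binomial}(s;\alpha_1,\alpha_j)$, and $\{X_j\ge X_1\}=\{X_1\le s/2\}$. By the Beta-limit/continuity argument of Proposition 3 (tie mass vanishing), $g_s:=P(X_1\le s/2\mid X_1+X_j=s)\to I_{1/2}(\alpha_1,\alpha_j)$ as $s\to\infty$; and since $X_1+X_j\sim\mathrm{Beta\text{-}Binomial}(n;\alpha_1+\alpha_j,\alpha_0-\alpha_1-\alpha_j)$ has $(X_1+X_j)/n$ converging to a Beta law supported on $(0,1)$, we have $X_1+X_j\xrightarrow{p}\infty$, whence bounded convergence gives $P(X_j\ge X_1)=\mathbb{E}[g_{X_1+X_j}]\to I_{1/2}(\alpha_1,\alpha_j)$. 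Summing over $j=2,\dots,m$ yields $\liminf_n P(\mathrm{Plu}_n)\ge 1-\sum_{j=2}^m I_{1/2}(\alpha_1,\alpha_j)$. Taking the maximum of the two bounds gives the stated inequality, and the final strict bound $\liminf_n P(\mathrm{Plu}_n)\ge 1-I_{1/2}(\alpha_1,\alpha_0-\alpha_1)>\alpha_1/\alpha_0$ under $\alpha_1>\alpha_0-\alpha_1$ follows from Proposition 4.

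\textbf{Expected obstacle.} The delicate step is the pairwise bound in part (c): one must correctly identify the conditional law of $X_1$ given the subtotal $X_1+X_j$ (the Dirichlet--Multinomial aggregation/``subvector'' property), handle the parity/tie issue inside $\{X_1\le s/2\}$, and justify passing the $s\to\infty$ limit through the random but diverging subtotal $X_1+X_j$ by bounded convergence. By contrast, the detailed-balance verification in part (a) is only Pochhammer algebra, and part (b) together with the majority half of part (c) are corollaries of the binary propositions.
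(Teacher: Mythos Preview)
Your proof is correct. Parts (a) and (b) match the paper's arguments essentially verbatim: the paper (Proposition~8) also verifies detailed balance for the proposed Dirichlet--Multinomial weights and reads off the Beta--Binomial marginals, and the paper's proof of part~(b) is exactly your reduction to Propositions~3 and~4 with $(\alpha,\beta)=(\alpha_1,\alpha_0-\alpha_1)$.

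The genuine divergence is in the pairwise step of part~(c). The paper first proves a separate limit theorem (Proposition~9): the full vote-share vector $X/n$ converges in probability to $\Theta\sim\mathrm{Dirichlet}(\alpha_1,\dots,\alpha_m)$, via the Dirichlet--Multinomial mixture and the conditional strong law. From this, $P(X_1\le X_j)\to P(\Theta_1\le\Theta_j)$ follows by the continuous-mapping/portmanteau argument (since $P(\Theta_1=\Theta_j)=0$), and the identity $\Theta_1/(\Theta_1+\Theta_j)\sim\mathrm{Beta}(\alpha_1,\alpha_j)$ gives $P(\Theta_1\le\Theta_j)=I_{1/2}(\alpha_1,\alpha_j)$ directly. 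You instead condition on the random subtotal $S=X_1+X_j$, use the Dirichlet--Multinomial aggregation/independence property to get $X_1\mid S=s\sim\mathrm{Beta\text{-}Binomial}(s;\alpha_1,\alpha_j)$, apply the binary Proposition~3 to the deterministic-$s$ conditional, and then push the limit through the diverging random $S$ by bounded convergence. Both routes are valid; the paper's is cleaner once the joint Dirichlet limit is in hand (one limit theorem covers all pairs at once, and ties are handled by absolute continuity of $\Theta$), while yours is more self-contained in that it never leaves the binary toolkit of Proposition~3, at the cost of the extra conditioning/bounded-convergence bookkeeping you correctly flag as the delicate step.
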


The next proposition proves part (a) of Theorem \ref{theorem7}; the proofs of parts (b) and (c) follow later.\par\medskip

\begin{proposition}
	\label{proposition8}
	(Stationary distribution for the multistate zealot voter model on fully mixed networks).
	Let $m\ge2$, $n\in\mathbb{N}$, and $\alpha_i>0$ for $i=1,\dots,m$. 
	Consider the Markov chain $\{X_t\}_{t\ge0}$ described in Theorem \ref{theorem7}. 
	Then $\{X_t\}_{t\ge0}$ is an irreducible, aperiodic Markov chain on $\mathcal{S}_n$ whose unique stationary distribution, independent of $\ell$, is
	\noindent
	\begin{equation}
		\pi(k_1,\dots,k_m)\;=\;
		\frac{\displaystyle\prod_{i=1}^m \binom{\alpha_i+k_i-1}{\,k_i\,}}
		{\displaystyle \binom{\alpha_0+n-1}{\,n\,}},
		\qquad (k_1,\dots,k_m)\in\mathcal{S}_n,\quad \alpha_0 := \sum_{i=1}^m \alpha_i,
		\label{DR_voter}
	\end{equation}
	where generalized binomial coefficients are understood in the Gamma-function sense. Equivalently,
	\[
	\begin{aligned}
		\pi(k_1,\dots,k_m)
		&= \frac{n!}{k_1!\cdots k_m!}\,
		\frac{(\alpha_1)_{k_1}\cdots(\alpha_m)_{k_m}}{(\alpha_1+\cdots+\alpha_m)_{n}} \\
		&= \frac{n!}{k_1!\cdots k_m!}\,
		\frac{\Gamma(\alpha_0)}{\Gamma(\alpha_0+n)}
		\prod_{i=1}^m \frac{\Gamma(\alpha_i+k_i)}{\Gamma(\alpha_i)}.
	\end{aligned}
	\]
	That is, \(X\sim\mathrm{Dirichlet\!-\!Multinomial}
	\big(n;\allowbreak \alpha_1,\dots,\allowbreak \alpha_m\big)\);
	for each \(j\), the marginal \(X_j\sim \mathrm{Beta\!-\!Binomial}
	\big(n;\allowbreak \alpha_j,\allowbreak \alpha_0-\alpha_j\big)\);
	moreover, when \(\alpha_1=\cdots=\alpha_m=1\), the distribution is
	uniform on \(\mathcal{S}_n\).
\end{proposition}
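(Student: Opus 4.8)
The plan is to follow the architecture of the proof of Proposition~\ref{proposition1}, replacing its one-dimensional birth--death step with a reversibility (detailed-balance) check on the simplex $\mathcal{S}_n$. First I would record the one-step dynamics. Because the complete graph makes the free voters exchangeable, from a macrostate $k=(k_1,\dots,k_m)$ the only transitions are to $k$ itself or to $k-e_i+e_j$ for an ordered pair $i\neq j$ with $k_i>0$, where $e_i$ denotes the $i$-th standard basis vector: the updated voter is currently in state $i$ (probability $k_i/n$) and copies a neighbor in state $j$ (probability $(k_j+\alpha_j)/(n-1+\alpha_0)$). Writing the kernel as $P=\ell I+(1-\ell)Q$ with $Q$ the pure copy-step kernel, so that $\pi Q=\pi$ implies $\pi P=\pi$ and the stationary law is independent of $\ell$, it suffices to analyze $Q$, whose nonzero off-diagonal entries are
\[
Q_{k,\,k-e_i+e_j}=\frac{k_i}{n}\cdot\frac{k_j+\alpha_j}{\,n-1+\alpha_0\,}\qquad(i\neq j,\ k_i>0).
\]
Irreducibility is immediate: since $\alpha_j>0$, every admissible transition has positive probability, and any composition in $\mathcal{S}_n$ can be converted into any other by moving voters one at a time. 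Aperiodicity follows from $Q_{k,k}\ge\sum_{i:\,k_i\ge1}\tfrac{k_i}{n}\cdot\tfrac{k_i-1+\alpha_i}{n-1+\alpha_0}>0$ for all $k$ (the updated voter may always copy a node already in its own state).

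The crux is to verify that the candidate law $\pi$ — the Dirichlet--Multinomial — is reversible for $Q$. Writing $\pi(k)=\tfrac{n!}{k_1!\cdots k_m!}\,\tfrac{\prod_i(\alpha_i)_{k_i}}{(\alpha_0)_n}$ and $k'=k-e_i+e_j$, the ratio $\pi(k')/\pi(k)$ telescopes: the multinomial coefficient contributes $k_i/(k_j+1)$, the Pochhammer factor in category $i$ contributes $1/(\alpha_i+k_i-1)$, and the one in category $j$ contributes $\alpha_j+k_j$, giving
\[
\frac{\pi(k')}{\pi(k)}=\frac{k_i\,(\alpha_j+k_j)}{(k_j+1)\,(\alpha_i+k_i-1)}.
\]
Since $Q_{k',\,k}=\tfrac{k_j+1}{n}\cdot\tfrac{k_i-1+\alpha_i}{n-1+\alpha_0}$, substituting shows $\pi(k)\,Q_{k,\,k-e_i+e_j}=\pi(k')\,Q_{k',\,k}$ on every admissible edge (for states not joined by a transition the identity is vacuous). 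Summing the balanced identities over the neighbors of a fixed state gives $\pi Q=\pi$, so $\pi$ is stationary (indeed reversible) for $Q$, and by irreducibility it is the unique stationary distribution, with convergence guaranteed by aperiodicity. The normalization is then pinned down by the multivariate Vandermonde identity $\sum_{k\in\mathcal{S}_n}\tfrac{n!}{k_1!\cdots k_m!}\prod_i(\alpha_i)_{k_i}=(\alpha_0)_n$, obtained either by iterating the two-term identity used in the proof of Proposition~\ref{proposition1} or by equating $z^n$ coefficients in $\prod_i(1-z)^{-\alpha_i}=(1-z)^{-\alpha_0}$.

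It remains to move between the displayed forms and read off the corollaries. The identity $(\alpha_i)_{k_i}=k_i!\binom{\alpha_i+k_i-1}{k_i}$ turns the Pochhammer product into $\prod_i\binom{\alpha_i+k_i-1}{k_i}$, and $(\alpha_0)_n/n!=\binom{\alpha_0+n-1}{n}$ supplies the denominator; the Gamma-function version follows from $(a)_m=\Gamma(a+m)/\Gamma(a)$. The marginal claim $X_j\sim\mathrm{Beta\text{-}Binomial}(n;\alpha_j,\alpha_0-\alpha_j)$ follows from the aggregation property of the Dirichlet--Multinomial (merging the $m-1$ categories other than $j$ into one), or equivalently by observing that lumping states produces exactly the two-type chain of Proposition~\ref{proposition1}. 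Finally, when $\alpha_1=\cdots=\alpha_m=1$ we have $\binom{\alpha_i+k_i-1}{k_i}=1$ and $\binom{\alpha_0+n-1}{n}=\binom{m+n-1}{n}=|\mathcal{S}_n|$, so $\pi$ is uniform on $\mathcal{S}_n$.

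The main obstacle, relative to the binary case, is conceptual rather than computational: $\{X_t\}$ is no longer a one-dimensional birth--death chain, so reversibility is not automatic and the detailed-balance equations must be checked directly. The key realization is that it suffices to check balance on the nearest-neighbor edges $\{k,\,k-e_i+e_j\}$ of $\mathcal{S}_n$ — even though the transition graph is not a tree, connectedness together with edgewise balance already forces $\pi Q=\pi$. Everything else is the routine Pochhammer bookkeeping above and the Vandermonde normalization.
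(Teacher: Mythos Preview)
Your proposal is correct and follows essentially the same route as the paper: write $P=\ell I+(1-\ell)Q$, record the one-step imitation probabilities on $\mathcal{S}_n$, verify detailed balance for the Dirichlet--Multinomial candidate on each nearest-neighbor edge, and normalize via the coefficient identity in $\prod_i(1-z)^{-\alpha_i}=(1-z)^{-\alpha_0}$. The only cosmetic differences are that the paper parameterizes the move as $k\to k+e_i-e_j$ and carries out the ratio computation in the generalized-binomial form rather than the Pochhammer form, and it is somewhat terser on aperiodicity, the marginal Beta--Binomial claim, and the uniform case, all of which you spell out.
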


\begin{proof}
	Write $P=\ell I+(1-\ell)Q$, where $Q$ is the pure imitation kernel ($\ell=0$). From a state $k=(k_1,\dots,k_m)$, an elementary move $k\to k+e_i-e_j$ ($i\neq j$) occurs under $Q$ with probability
	\[
	Q\big(k\to k+e_i-e_j\big)=\frac{k_j}{n}\cdot \frac{k_i+\alpha_i}{\,n-1+\alpha_0\,},
	\]
	since we pick a free $j$-node with probability $k_j/n$ and it copies an $i$-neighbor (free or zealot) with probability $(k_i+\alpha_i)/(n-1+\alpha_0)$.
	We \emph{propose} the stationary weights in the generalized-binomial form
	\[
	w(k)\;:=\;\prod_{i=1}^m \binom{\alpha_i+k_i-1}{\,k_i\,}.
	\]
	Consider $k' = k+e_i-e_j$ with $i\neq j$. Using the elementary identities
	\[
	\frac{\binom{a+r}{\,r+1\,}}{\binom{a+r-1}{\,r\,}}=\frac{a+r}{r+1},
	\qquad
	\frac{\binom{a+(r-1)-1}{\,r-1\,}}{\binom{a+r-1}{\,r\,}}=\frac{r}{a+r-1},
	\]
	valid for all real $a>0$ and integers $r\ge1$, we obtain
	\[
	\frac{w(k')}{w(k)}
	=\frac{\binom{\alpha_i+k_i}{\,k_i+1\,}}{\binom{\alpha_i+k_i-1}{\,k_i\,}}
	\cdot
	\frac{\binom{\alpha_j+(k_j-1)-1}{\,k_j-1\,}}{\binom{\alpha_j+k_j-1}{\,k_j\,}}
	=\frac{\alpha_i+k_i}{k_i+1}\cdot \frac{k_j}{\alpha_j+k_j-1}.
	\]
	On the other hand, the reverse move $k'\to k$ has probability
	\[
	Q(k'\to k)=\frac{k_i+1}{n}\cdot \frac{k_j-1+\alpha_j}{\,n-1+\alpha_0\,},
	\]
	so the transition ratio satisfies
	\[
	\frac{Q(k\to k')}{Q(k'\to k)}
	=\frac{\frac{k_j}{n}\cdot \frac{k_i+\alpha_i}{n-1+\alpha_0}}{\frac{k_i+1}{n}\cdot \frac{k_j-1+\alpha_j}{n-1+\alpha_0}}
	=\frac{k_j}{k_i+1}\cdot \frac{\alpha_i+k_i}{\alpha_j+k_j-1}
	=\frac{w(k')}{w(k)}.
	\]
	Therefore $w(k)\,Q(k\to k')=w(k')\,Q(k'\to k)$ for all adjacent pairs, i.e.\ detailed balance holds for $Q$. Hence $\pi:=C\,w$ is stationary for $Q$, and thus for $P$ as well (since $\pi P=\ell\pi+(1-\ell)\pi=\pi$). Irreducibility and aperiodicity of $\{K_t\}$ follow from $\ell<1$ and $\alpha_i>0$ (all unit moves $e_i-e_j$ occur with positive probability).
	It remains to determine the normalizing constant $C$. Consider the generating function identity
	\[
	\sum_{r\ge0} \binom{\alpha_i+r-1}{\,r\,} z^r=(1-z)^{-\alpha_i}\qquad (|z|<1),
	\]
	multiply over $i=1,\dots,m$, and extract the coefficient of $z^n$:
	\[
	\sum_{k\in\mathcal{S}_n}\prod_{i=1}^m \binom{\alpha_i+k_i-1}{\,k_i\,}
	=\binom{\alpha_0+n-1}{\,n\,}.
	\]
	Thus $C=\binom{\alpha_0+n-1}{\,n\,}^{-1}$ and Eq.(\ref{DR_voter}) follows. Uniqueness of the stationary distribution and convergence to it are standard for finite irreducible and aperiodic Markov chains.
\end{proof}

The proof of Theorem \ref{theorem7}(b) appears below.\par\medskip
{%
\renewcommand{\proofname}{Proof of Theorem \ref{theorem7}(b) (Collective majority accuracy with $m$ zealot types)}
\begin{proof}
	\label{prooftheorem7b}
	Let $\mathrm{Maj}_n=\{X_1\ge \lfloor n/2\rfloor+1\}$ denote the event that a strict majority of free voters
	choose the correct state $1$. By part (a) of Theorem \ref{theorem7}, at stationarity
	\[
	X_1 \sim \mathrm{Beta\text{-}Binomial}\big(n;\alpha_1,\alpha_0-\alpha_1\big),
	\qquad \alpha_0:=\sum_{i=1}^m \alpha_i .
	\]
	By Theorem~1(a), the exact finite-$n$ majority probability is the
	Beta--Binomial upper tail:
	\[
	\begin{aligned}
		P(\mathrm{Maj}_n)
		&= \sum_{k=\lfloor n/2\rfloor+1}^{n}
		\binom{n}{k}\;
		\frac{\displaystyle \binom{\alpha_1+k-1}{\,k\,}\;
			\binom{n+\alpha_0-\alpha_1-k-1}{\,n-k\,}}
		{\displaystyle \binom{\alpha_0+n-1}{\,n\,}}\,.
	\end{aligned}
	\]
	If \(\alpha_1 > \alpha_0 - \alpha_1\) (equivalently, \(\alpha_1/\alpha_0 > \tfrac12\)), then, in the large-\(n\) limit,
	\[
	\lim_{n\to\infty} P(\mathrm{Maj}_n)
	= 1 - I_{1/2}\!\big(\alpha_1,\alpha_0-\alpha_1\big) > \frac{\alpha_1}{\alpha_0}.
	\]
	The equality follows from Proposition~3 (applied with \((\alpha,\beta)=(\alpha_1,\alpha_0-\alpha_1)\)), and the strict inequality from Proposition~4. As shown in Remark~3, \(\alpha_1/\alpha_0\) is the probability that a free, independent voter selects the correct alternative \(1\).
\end{proof}
} 
We will use the next proposition to prove part (c) of Theorem \ref{theorem7}.\par\medskip

\begin{proposition}
	\label{proposition9}
	(Dirichlet limit for the stationary vote shares).
	Let \(X=(X_1,\dots,X_m)\sim\mathrm{Dirichlet\text{-}Multinomial}\!\big(n;\alpha_1,\dots,\alpha_m\big)\) with \(\alpha_i>0\)
	and \(\alpha_0:=\sum_{i=1}^m \alpha_i\). Define the stationary proportions
	\[
	V_n:=\frac{X}{n}=(X_1/n,\dots,X_m/n)\in
	\Delta_m:=\big\{\,v\in[0,1]^m:\ \sum_{i=1}^m v_i=1\,\big\}.
	\]
	Then
	\[
	V_n \xrightarrow{\ p\ } \Theta
	\quad\text{and hence}\quad
	V_n \xrightarrow{\ d\ } \Theta,
	\]
	where \(\Theta\sim\mathrm{Dirichlet}(\alpha_1,\dots,\alpha_m)\), with density
	\[
	f(v)=\frac{\Gamma(\alpha_0)}{\displaystyle\prod_{i=1}^m \Gamma(\alpha_i)}
	\prod_{i=1}^m v_i^{\alpha_i-1},\qquad v\in\Delta_m.
	\]
\end{proposition}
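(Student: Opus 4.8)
The plan is to reprise the proof of Proposition~\ref{proposition3} in the multivariate setting, replacing the Beta--Binomial mixture by the Dirichlet--Multinomial mixture and the scalar strong law of large numbers by its coordinatewise version. First I would invoke the standard hierarchical representation of the Dirichlet--Multinomial: on a common probability space, take $\Theta=(\Theta_1,\dots,\Theta_m)\sim\mathrm{Dirichlet}(\alpha_1,\dots,\alpha_m)$ and, conditionally on $\Theta$, draw i.i.d.\ categorical variables $Y_1,\dots,Y_n$ with $P(Y_t=i\mid\Theta)=\Theta_i$, then set $X_i=\#\{t\le n:\ Y_t=i\}$. Marginalizing over $\Theta$ reproduces exactly the Dirichlet--Multinomial law of $X$ displayed in Proposition~\ref{proposition8}, so this is a legitimate coupling of $X$ with a $\mathrm{Dirichlet}$ vector $\Theta$ on one space.

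Next, condition on $\Theta$. For each fixed $i$, the count $X_i$ is a sum of $n$ i.i.d.\ $\mathrm{Bernoulli}(\Theta_i)$ indicators $\mathbf{1}\{Y_t=i\}$, so by the strong law of large numbers $X_i/n\to\Theta_i$ almost surely, conditionally on $\Theta$, for $\mathrm{Dirichlet}$-almost every value of $\Theta$. Intersecting over the finitely many coordinates $i=1,\dots,m$ gives $V_n=X/n\to\Theta$ almost surely given $\Theta$; integrating this conditional statement over $\Theta$---equivalently, applying conditional dominated convergence to obtain $\mathbb{E}\big[\min(1,\|V_n-\Theta\|)\big]=\mathbb{E}\big[\mathbb{E}[\min(1,\|V_n-\Theta\|)\mid\Theta]\big]\to0$---upgrades this to unconditional almost-sure convergence, whence $V_n\xrightarrow{\ p\ }\Theta$. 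Since $\Delta_m$ is compact, coordinatewise convergence already yields joint convergence. Finally, convergence in probability to the random vector $\Theta$ immediately gives $V_n\xrightarrow{\ d\ }\Theta\sim\mathrm{Dirichlet}(\alpha_1,\dots,\alpha_m)$, with the displayed density on $\Delta_m$; this is exactly the form used in part~(c) of Theorem~\ref{theorem7} to pass from $P(\mathrm{Plu}_n)$ and the pairwise comparison events $\{X_1>X_j\}$ to their Dirichlet-integral limits, via the portmanteau theorem since the boundaries of the relevant comparison regions (e.g.\ $\{v:v_1=v_j\}$) carry no Dirichlet mass.

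A purely moment-based alternative avoids the coupling altogether: compute the mixed descending-factorial moments $\mathbb{E}\big[\prod_{i=1}^m X_i(X_i-1)\cdots(X_i-r_i+1)\big]=n(n-1)\cdots(n-r+1)\cdot\prod_{i=1}^m(\alpha_i)_{r_i}\big/(\alpha_0)_r$, with $r=\sum_i r_i$ and $(a)_m$ the Pochhammer symbol as in Proposition~\ref{proposition1}, divide by $n^{r}$, and let $n\to\infty$ to obtain $\prod_i(\alpha_i)_{r_i}/(\alpha_0)_r$, which are precisely the moments of $\mathrm{Dirichlet}(\alpha_1,\dots,\alpha_m)$; since that law is supported on the compact simplex and hence determined by its moments, the method of moments gives $V_n\xrightarrow{\ d\ }\Theta$. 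I do not anticipate a serious obstacle on either route: the only delicate point is the measure-theoretic passage from the conditional strong law to unconditional convergence, handled by the tower-property estimate above. It is worth emphasizing the conceptual contrast with Remark~\ref{remark2}: here $\alpha_1,\dots,\alpha_m$ are fixed while $n\to\infty$, so $V_n$ does \emph{not} concentrate at a point---its limit $\Theta$ is genuinely random---which is exactly why a nondegenerate Dirichlet limit arises.
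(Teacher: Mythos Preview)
Your primary argument is essentially identical to the paper's: invoke the Dirichlet--Multinomial mixture representation with i.i.d.\ categorical draws given $\Theta$, apply the conditional strong law of large numbers coordinatewise to obtain $V_n\to\Theta$ a.s.\ given $\Theta$, then pass to unconditional convergence in probability and hence in distribution. The moment-based alternative you sketch is a bonus not present in the paper, but your main route matches the paper's proof line for line (with somewhat more care in justifying the conditional-to-unconditional step).
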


\begin{proof}
Use the standard mixture representation of the Dirichlet--Multinomial:
\[
\Theta=(\Theta_1,\dots,\Theta_m)\sim\mathrm{Dirichlet}(\alpha_1,\dots,\alpha_m),
\qquad
X\mid \Theta \sim \mathrm{Multinomial}\big(n,\Theta\big),
\]
with $\Theta$ independent of the sampling.

Conditionally on $\Theta$, realize the $n$ free voters as i.i.d.\ categorical variables
$Z_1,\dots,Z_n\in\{1,\dots,m\}$ with $\Pr(Z_t=i\mid\Theta)=\Theta_i$. Then the counts are
\[
X_i=\sum_{t=1}^n I_{\{Z_t=i\}},\qquad i=1,\dots,m,
\]
so the proportions satisfy
\[
(V_n)_i=\frac{X_i}{n}=\frac{1}{n}\sum_{t=1}^n I_{\{Z_t=i\}}.
\]
Here $I_A$ denotes the indicator of an event $A$ (equal to $1$ if $A$ occurs and $0$ otherwise).

By the strong law of large numbers applied conditionally on $\Theta$,
\[
(V_n)_i \xrightarrow{\ \text{a.s.}\ } \Theta_i \quad \text{for each } i=1,\dots,m,
\]
hence $V_n \xrightarrow{\ \text{a.s.}\ } \Theta$ given $\Theta$. Taking expectations over $\Theta$ yields the unconditional convergence in probability:
\[
V_n \xrightarrow{\ p\ } \Theta.
\]
Since convergence in probability implies convergence in distribution, we also have
$V_n \overset{d}{\longrightarrow} \Theta$, and $\Theta$ has the Dirichlet$(\alpha_1,\dots,\alpha_m)$ density displayed above.
\end{proof}

We now prove Theorem \ref{theorem7} (c).
\par\medskip
{%
\renewcommand{\proofname}{Proof of Theorem~\ref{theorem7}(c) (Collective plurality accuracy with $m$ zealot types)}
\begin{proof}
	We first analyze the case that state 1 wins by majority. Recall that $\mathrm{Plu}_n=\{X_1>\max_{j\ne1}X_j\}$ and $\mathrm{Maj}_n=\{X_1\ge \lfloor n/2\rfloor+1\}$.
	
	\par\medskip
	\noindent\emph{Step 1.} Since $\mathrm{Maj}_n\subseteq \mathrm{Plu}_n$,
	\[
	P(\mathrm{Plu}_n)\ \ge\ P(\mathrm{Maj}_n)\quad\Longrightarrow\quad
	\lim_{n\to\infty}P(\mathrm{Plu}_n)\ \ge\ 1-I_{1/2}(\alpha_1,\alpha_0-\alpha_1),
	\]
	by part (b) of Theorem \ref{theorem7}.
	
	\par\medskip
	\noindent\emph{Step 2.}
	We next analyze the case that state 1 wins by plurality. By Proposition \ref{proposition9}, 
	\[
	V_n:=\frac{X}{n}\ \xrightarrow{\ p\ }\ \Theta,
	\]
	where \(\Theta\sim\mathrm{Dirichlet}(\alpha_1,\dots,\alpha_m)\). Hence for each fixed $j\ge2$,
	\[
	\frac{X_1}{n}\ \xrightarrow{\ p\ }\ \Theta_1,\qquad
	\frac{X_j}{n}\ \xrightarrow{\ p\ }\ \Theta_j.
	\]
	We use the following standard comparison property: if $A_n\to A$ and $B_n\to B$ in probability and
	$P(A=B)=0$, then $P(A_n<B_n)\to P(A<B)$. Applying this with $A_n=X_1/n$, $B_n=X_j/n$, $A=\Theta_1$, $B=\Theta_j$, and noting that a Dirichlet vector has a continuous
	density (so $P(\Theta_1=\Theta_j)=0$), we obtain
	\[
	\lim_{n\to\infty}P(X_1\le X_j)
	=\lim_{n\to\infty}P\!\big(\tfrac{X_1}{n}\le \tfrac{X_j}{n}\big)
	= P(\Theta_1\le \Theta_j).
	\]
	For the Dirichlet law,
	\[
	\frac{\Theta_1}{\Theta_1+\Theta_j}\sim \mathrm{Beta}(\alpha_1,\alpha_j)
	\quad\Rightarrow\quad
	P(\Theta_1\le \Theta_j)=P\!\Big(\frac{\Theta_1}{\Theta_1+\Theta_j}\le\tfrac{1}{2}\Big)=I_{1/2}(\alpha_1,\alpha_j).
	\]
	
	\par\medskip
	\noindent\emph{Step 3.}
	Note that the complement of plurality can be written as a union of pairwise comparison failures:
	\[
	\mathrm{Plu}_n^c=\{X_1\le \max_{j\ne1}X_j\}=\bigcup_{j=2}^m \{X_1\le X_j\}.
	\]
	Hence, by the union bound,
	\[
	P(\mathrm{Plu}_n)=1-P(\mathrm{Plu}_n^c)
	\ \ge\ 1-\sum_{j=2}^m P(X_1\le X_j).
	\]
	Taking limits and using Step~2,
	\[
	\lim_{n\to\infty} P(\mathrm{Plu}_n)
	\ \ge\ 1-\sum_{j=2}^{m}\lim_{n\to\infty}P(X_1\le X_j)
	\ =\ 1-\sum_{j=2}^{m} I_{1/2}(\alpha_1,\alpha_j).
	\]
	Together with Step~1, we conclude
	\begin{equation}\label{eq:plu-lb}
		\lim_{n\to\infty} P(\mathrm{Plu}_n)\ \ge\
		\max\Big\{\,1-I_{1/2}(\alpha_1,\alpha_0-\alpha_1)\,,\
		\ 1-\sum_{j=2}^{m} I_{1/2}(\alpha_1,\alpha_j)\,\Big\}.
	\end{equation}
	Finally, since the right-hand side of Eq.\eqref{eq:plu-lb} is at least
	\(1 - I_{1/2}(\alpha_1,\alpha_0-\alpha_1)\), Proposition~4 implies that it is
	strictly greater than \(\alpha_1/\alpha_0\) whenever \(\alpha_1 > \alpha_0 - \alpha_1\).
	Thus the single condition \(\alpha_1 > \alpha_0 - \alpha_1\) (equivalently,
	\(\alpha_1/\alpha_0 > \tfrac12\)) provides a simple sufficient criterion for the
	plurality lower bound in Theorem \ref{theorem7}(c) to strictly exceed the individual accuracy
	under independent voting, \(\alpha_1/\alpha_0\) (see Remark~3). \hfill\(\square\)
	
\end{proof}
}

We conclude with a brief remark establishing that, under independent voting, majority and plurality accuracy parallel the classical CJT as \(n\to\infty\).
\medskip
\begin{remark}
	\label{remark3}
	(Independent voters: majority and plurality accuracy).
	If free voters act independently and interact only with zealots, each free voter
	samples uniformly from the \(\alpha_0:=\sum_{i=1}^m \alpha_i\) zealots. Thus the free
	votes are i.i.d.\ with category probabilities
	\[
	q_i=\frac{\alpha_i}{\alpha_0},\qquad i=1,\dots,m,
	\]
	and the count vector is
	\[
	X=(X_1,\dots,X_m)\sim \mathrm{Multinomial}\!\big(n;\,q_1,\dots,q_m\big).
	\]
	
	\medskip
	\noindent\emph{Majority rule.}
	Under strict majority, only the dichotomy “\(1\) vs.\ not-\(1\)” matters:
	\[
	P(\text{majority for \(1\)})
	= P\!\left(X_1 \ge \left\lfloor \tfrac{n}{2}\right\rfloor+1\right)
	= \sum_{k=\lfloor n/2\rfloor+1}^{n} \binom{n}{k}\, q_1^{\,k}\,(1-q_1)^{\,n-k}.
	\]
	Consequently, if \(q_1>1/2\), then (as in the classical CJT)
	\[
	P(\text{majority for \(1\)})>q_1
	\quad\text{and}\quad
	P(\text{majority for \(1\)}) \xrightarrow[n\to\infty]{} 1.
	\]
	
	\medskip
	\noindent\emph{Plurality rule.}
	Under plurality, state \(1\) wins when it strictly exceeds every competitor:
	\[
	P(\text{plurality for \(1\)}) \;=\; P\!\big(X_1>\max_{j\ne 1}X_j\big),
	\qquad X\sim\mathrm{Multinomial}\big(n;\,q_1,\dots,q_m\big).
	\]
	While a closed form is generally unavailable, the law of large numbers yields
	\(\tfrac{X_i}{n}\xrightarrow{\ p\ }q_i\) for each \(i\). Assume, without loss of generality,
	that \(\alpha_1\neq \alpha_j\) for all \(j\neq 1\) (equivalently, \(q_1\neq q_j\)). Then
	\[
	P(\text{plurality for }1)\xrightarrow[n\to\infty]{}
	\begin{cases}
		1, & \text{if } q_1>\max_{j\ne 1} q_j,\\
		0, & \text{if } q_1<\max_{j\ne 1} q_j.
	\end{cases}
	\]
	
	(For completeness: if exactly \(r\) categories tie at the top, the limit equals \(1/r\) by symmetry.)
	Finally, since strict majority implies plurality, \(P(\text{plurality for }1)\ge P(\text{majority for }1)\) for all \(n\).
	
\end{remark}

\section{CONCLUDING REMARKS}

We developed an analytical framework for collective accuracy in networked electorates with zealots, where free voters receive privately informative signals and update by local imitation. For finite populations, we obtain closed-form stationary distributions of the vote process and, as the number of free voters grows, characterize the corresponding vote-share limits. Under majority rule, both in binary and multi-alternative settings, we derive an exact large-population accuracy limit in closed form via the regularized incomplete beta function. Under plurality rule, we establish sharp closed-form lower bounds on accuracy, likewise expressed through regularized incomplete beta functions. An absolute-majority condition for the correct alternative guarantees that both majority and plurality accuracies strictly exceed the accuracy of any single voter, demonstrating amplification of informative signals through social interaction. These results extend the Condorcet jury theorem beyond independence and provide closed-form accuracy benchmarks for the design and analysis of social, biological, and engineered decision systems.

Further directions include moving beyond well-mixed interactions to structured topologies, including sparse, modular, and time-varying graphs, to clarify how connectivity modulates amplification and convergence; introducing agent-level heterogeneity, e.g., variation in signal quality, imitation rates, or local exposure to zealots, which may shift the relative performance of majority and plurality across parameter regimes; and developing finite-sample guarantees and concentration inequalities to guide decision thresholds, such as supermajorities, and the design of multi-stage aggregation schemes.

	\section{Acknowledgments}
	M.A.M.A. would like to thank FAPESP, grant 2021/14335-0, and CNPq, grant 303814/2023-3 for financial support.
	



\end{document}